\newtheorem{theorem}{Theorem}
\newtheorem{lemma}[theorem]{Lemma}
\newtheorem{proposition}[theorem]{Proposition}
\newtheorem{definition}{Definition}
\newtheorem{remark}{Remark}
\newtheorem{assumption}{Assumption}
\newcommand{\mbb}[1]{\mathbb #1}
\newcommand{\mbf}[1]{\mathbf #1}
\newcommand{\mcl}[1]{\mathcal #1}
\newcommand{\R}{\mbb{R}}
\newcommand{\Rnx}{\mbb{R}^{n_x}}
\newcommand{\Znv}{\mbb{Z}^{n_v}}
\newcommand{\Rnu}{\mbb{R}^{n_u}}
\DeclareMathOperator {\conv}{conv}
\DeclareMathOperator {\D}{d}
\newcommand{\eDef}{\hfill \Box}
\title{
Turnpike Properties in Discrete-Time  Mixed-Integer Optimal Control 
}
\author{Timm Faulwasser$^{\star}$, Alexander Murray$^{\diamond}$  
	\thanks{$^{\star}$ Timm Faulwasser is with the Institute for Energy Systems, Energy Efficiency and Energy Economics at TU Dortmund University, Germany. {\tt\small timm.faulwasser@ieee.org}}
	\thanks{$^{\diamond}$ Alexander Murray is with the Institute for Automation and Applied Informatics at the Karlsruhe Institute of Technology, Germany. {\tt\small alexander.murray@kit.edu}}%
}
\begin{document}

\maketitle
\thispagestyle{empty}
\pagestyle{empty}

\begin{abstract}
 This note discusses properties of parametric discrete-time Mixed-Integer Optimal Control Problems (MIOCPs) as they often arise in mixed-integer NMPC. We argue that in want for a handle on similarity properties of parametric MIOCPs the classical turnpike notion from optimal control is helpful. We provide sufficient turnpike conditions based on a dissipativity notion of MIOCPs, and we show that the turnpike property allows specific and accurate guesses for the integer-valued controls. More\-over, we show how the turnpike property can be used to derive efficient node-weighted branch-and-bound schemes tailored to parametric MIOCPs. We draw upon numerical examples to illustrate our findings. 

\end{abstract}

\section{Introduction}
Recently, optimization-based control of dynamic systems has seen tre\-mendous progress  spanning from industrial applications of   Non-linear Model Predictive Control (NMPC)  \cite{Qin00,Engell12,Rawlings17} to efficient solution methods for Mixed-Integer Opti\-mal Control Problems (MIOCPs) \cite{Belotti13,Gerdts12a,Sager12a,Bemporad_2018,Geyer_2009,Dua02a}. 
By now, for consider\-ably non-linear and non-convex cases imple\-mentation within the milli- to micro-second range can be achieved---provided the underlying continuous OCP can be approximated by a Non-Linear Programm (NLP), see e.g. \cite{Houska11a}. These solution times are possible since the repeatedly solved optimiza\-tion is \textit{parametric} in the initial condition of the underlying dynamic system. In turn, this allows leveraging classical sensitivity properties 
of NLPs, see \cite{Diehl01a}.

While transferring sensitivity concepts from OCPs and NLPs to MIOCPs and MINLPs is not straightforward, convincing cases have been made that also MIOCPs can be solved efficiently, e.g. \cite{Bock_2018,Sager12a,Kirches10b} or \cite[Chap. 7]{Belotti13}. 
These works typically rely on efficient solutions to relaxed auxilliary problems, i.e. often they employ outer con\-vexi\-fica\-tion. However, they do not explicitly exploit the parametric nature of the underlying MIOCP. Indeed only limited results on the analysis of  parametric mixed-integer programs seem to be available.  Multi-parametric MILPs, MIQPs and MINLPs are discussed in \cite{Dua00a,Dua02a,Gueddar12a}. However, these results do not touch upon parametric MIOCPs.

In the present note we leverage dissipativity concepts to explicitly characterize helpful properties of parametric  discrete-time MIOCPs. 
Specifically, we discuss cases of {MIOCP}s exhibiting the so-called \textit{turnpike property}. Turn\-pikes occur in parametric {OCP}s where---for varying initial conditions and increasing horizon length---the time that optimal solutions spend outside of any $\varepsilon$-neighbor\-hood of the optimal steady state is bounded independent of the actual horizon length. The turnpike notion was coined by \cite{Dorfman58} in the 1950s, and early observations of the phenomenon can be traced to works of John von Neumann in the 1930s/1940s \cite{vonNeumann38}. There has been long\-standing interest in turnpikes properties in the context of optimal control in economics \cite{Carlson_1991}.
Recently, there has been renewed interested in turnpikes for continuous OCPs \cite{Trelat_2015,Faulwasser_2017}.

Herein, we argue that in want for a handle on similarity properties of para\-metric MIOCPs all is not lost, and that the turnpike notion enables such a characteriza\-tion. Indeed, we advocate that in-depth investigation of turnpikes in parametric MIOCPs might open new avenues to tailored solution schemes for sequences of MIOCPs, e.g. arising in mixed-integer NMPC. 
Yet, to the best of the authors' knowledge,  there is no analysis of the turnpike phenomenon in MIOCPs available in the literature. To this end, we present a sufficient condition for turnpikes to arise which in turn is based on a dissipativity characterization of MIOCPs and we present a sufficient condition certifying dissipativity of linear-quadratic MIOCPS.  Based on this, we sketch a tailored node-weighted branch-and-bound scheme which leverages the underlying turnpike for the sake of efficient solution. Numerical examples illustrate the benefits of the proposed scheme.

\section{Turnpikes in MIOCPs}\label{sec:prob}
We consider MIOCPs of the following form
\begin{subequations} \label{eq:MIOCP}
	\begin{align} 
		\min_{x(\cdot),u(\cdot),v(\cdot)}~& \sum\limits_{k=0}^{N-1} \ell(x(k),u(k),v(k)) +V_f(x(N))\\
		\text{s.t. } \forall k&\in \{0,\dots,N-1\},\nonumber\\
		x(k+1) &= f(x(k), u(k), v(k)),~x(0)=x_0 \in\mbb{X}_0\\
		x(k)&\in  \mathbb{X}\subseteq\Rnx, \quad u(k)\in  \mathbb{U}\subseteq\Rnu\\
		v(k)&\in  \mathbb{V}\subseteq\Znv, \label{eq:MIOCP_Z}
	\end{align}
\end{subequations}
where the stage cost $\ell:\Rnx\times\Rnu\times\Znv \to \R$, the terminal cost $V_f:\Rnx\times\Rnu\times\Znv \to \R$, and the dynamics $f:\Rnx\times\Rnu\times\Znv \to \Rnx$ are assumed to be Lipschitz continuous in $x$, $u$, and $v$. The constraint sets $\mbb{X}$, $\mbb{U}$ and $\mbb{V}$ are assumed to be compact. The core challenge in the discrete-time OCP \eqref{eq:MIOCP}  is that the input $v$ is assumed to take only discrete values, cf. \eqref{eq:MIOCP_Z}.

Optimal solutions, provided they exist, are written as 
$$
z^\star(\cdot; x^0) =\begin{bmatrix} x^\star(\cdot;x_0) & u^\star(\cdot;x_0) &v^\star(\cdot;x_0) \end{bmatrix}^\top
$$
denoting the optimal primal triplet. Whenever no confusion can arise, we suppress the dependence on the initial condition $x_0$. 
The stationary counterpart of \eqref{eq:MIOCP} is the MINLP
\begin{subequations} \label{eq:MISOP}
	\begin{align} 
	\min_{\bar x, \bar u, \bar v}&~  \ell(\bar x,\bar u,\bar v)\\
	\text{s. t. } &\nonumber\\
	\bar x &= f(\bar x, \bar u, \bar v), \\
	\bar x&\in  \mathbb{X}\subseteq\Rnx, \quad \bar u\in  \mathbb{U}\subseteq\Rnu\\
	\bar v&\in  \mathbb{V}\subseteq\Znv. \label{eq:MISOP_Z}
	\end{align}
\end{subequations}
Simliar to before, the optimal solution is denoted by  $\bar z^\star = \begin{bmatrix} \bar x^\star& \bar u^\star& \bar v^\star \end{bmatrix}^\top$.
We are interested in studying the similarity properties of solutions to \eqref{eq:MIOCP} for varying initial conditions $x_0\subset \mbb{X}_0 \subseteq \mbb{X}$ and varying horizon lengths $N\in \mbb{N}$. 
Put differently, we are interested in analyzing MIOCP \eqref{eq:MIOCP} as a problem parametric in $x_0$ and $N\in\mbb{N}$.

\subsubsection*{Illustrative Example}

	We consider a straight-forward modification of a simple problem presented in \cite{Faulwasser_2018,Gruene_2013} which reads
	\begin{align}  \label{simple_example}
	\min_{x(\cdot),u(\cdot),v(\cdot)}& \sum\limits_{k=0}^{N-1} u(k)^2+{\textstyle\frac{1}{2}}v(k)^2\nonumber\\
	\text{s. t. }	\forall k&\in \{0,\dots,N-1\},\\
	x(k+1) &= 2x(k) + u(k) + v(k)-1, \quad x(0) = x_0 \nonumber\\
	 \begin{bmatrix}x(k)~u(k)~v(k)\end{bmatrix}^\top &\in [-2,2]\times [-3,3] \times \{-1,0,1\}.  \nonumber
	\end{align}
Figure \ref{fig:example_plot} shows the results for the horizon $N=30$ and several initial conditions $x_0$.
Note that $z^\star(\cdot;x_0)$ is  close to its optimal steady state value $\bar z^\star =[1~0~0]^\top$ for a large part of the time horizon. This similarity of optimal solutions for different initial conditions is called \textit{turnpike phenomenon}; a definition for {MIOCP}s will be given below. 

\begin{figure}[t]
		\centering
	\includegraphics[width=0.4\textwidth]{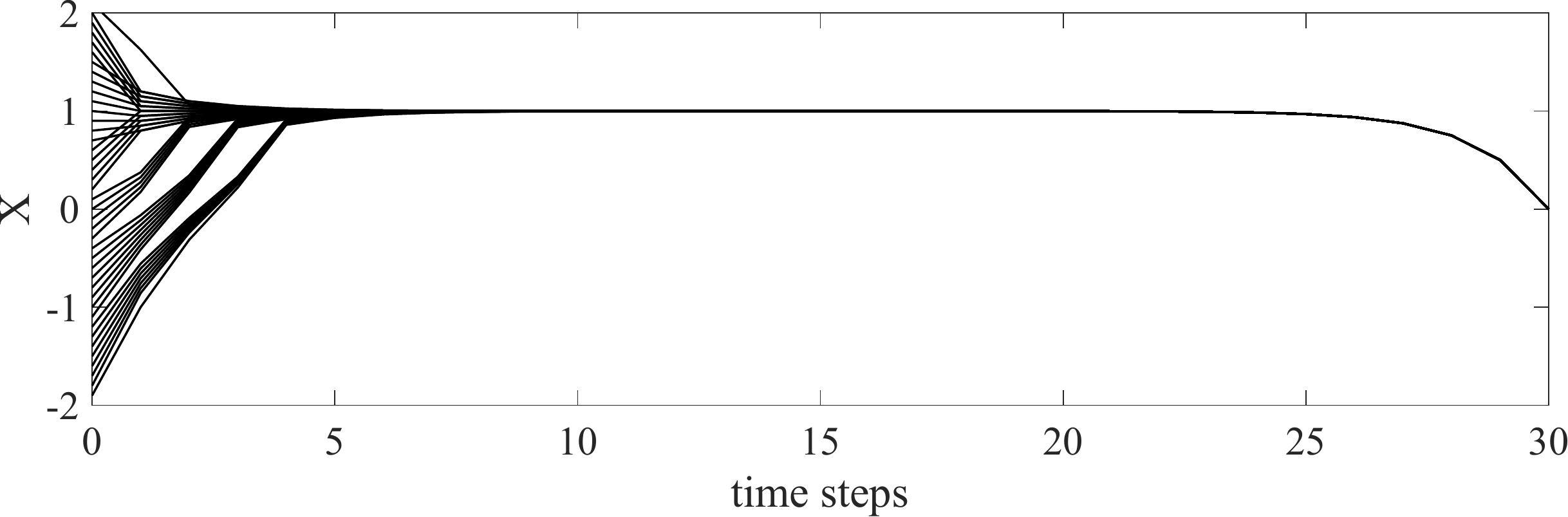}
	\includegraphics[width=0.4\textwidth]{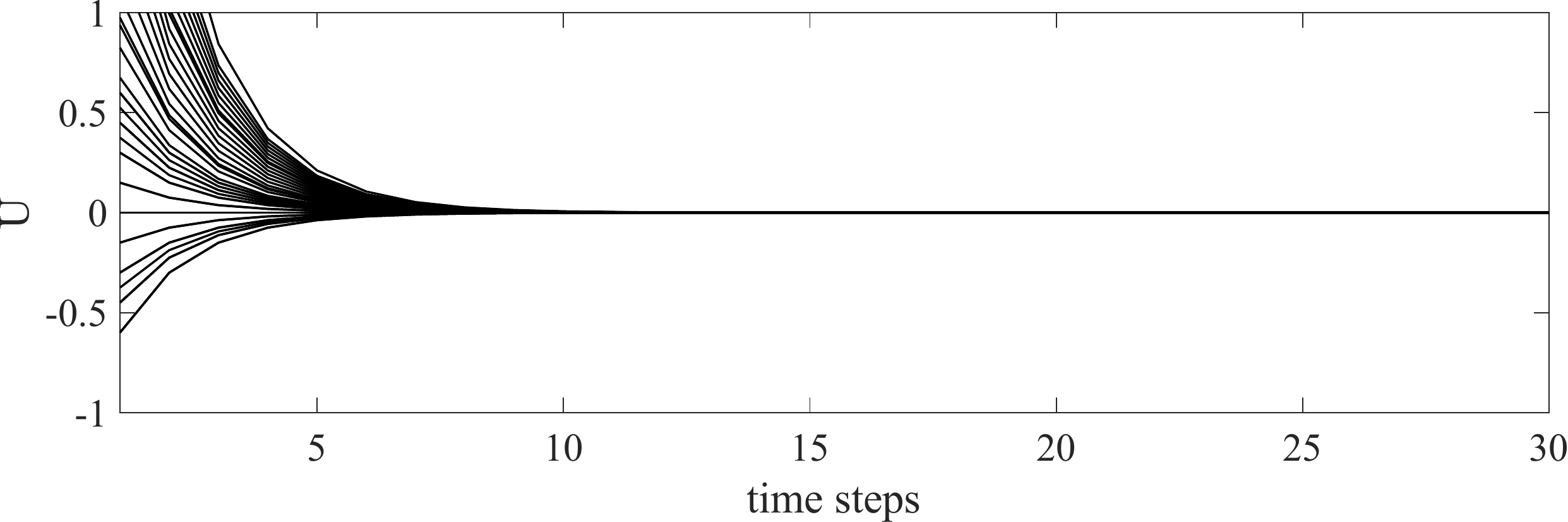}
	\includegraphics[width=0.4\textwidth]{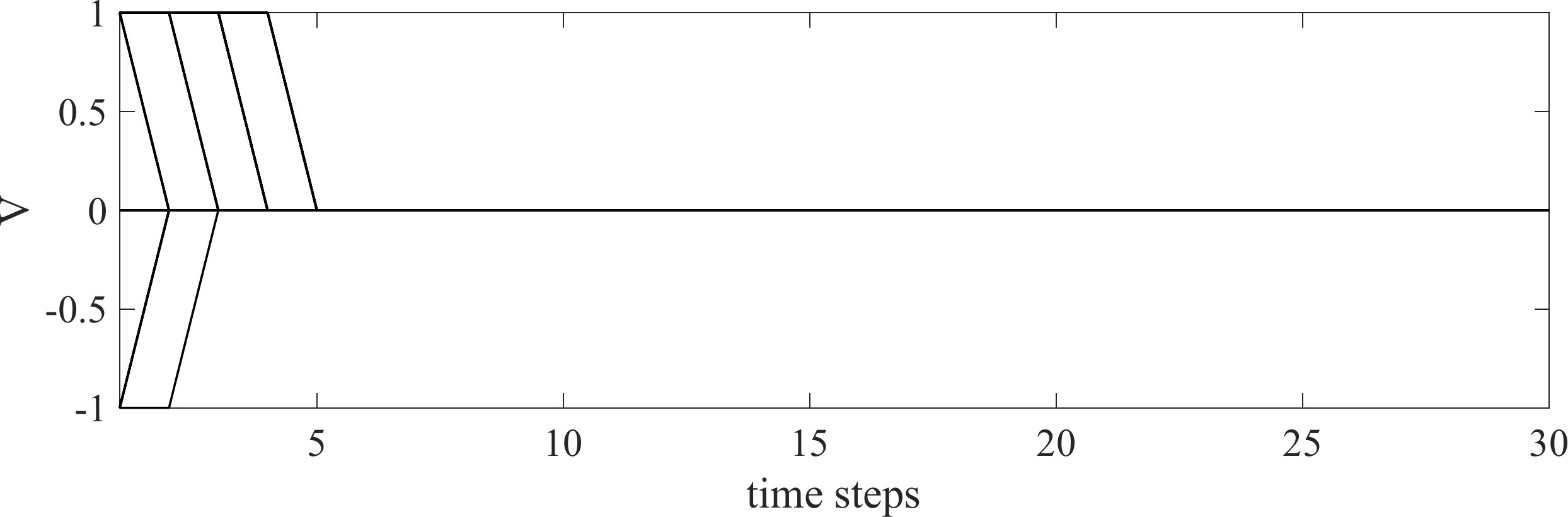}
	\caption{Example plot of the optimal solutions for Problem \eqref{simple_example} with $N=30$ and a variety of initial copnditions $x_0$.}
	\label{fig:example_plot}
\end{figure}

\subsection*{Sufficient Conditions for Turnpikes in MIOCPs}
First, we suggest a rigorous definition of the phenomenon:
\begin{definition}[Mixed-integer turnpike property]\label{def:turnpike}
	MIOCP \eqref{eq:MIOCP} is said to have an input-state turnpike if for all $x_0 \in \mbb{X}_0$ and all $N\in \mbb{N}$
	$$
	\#\mathbb{Q}_\varepsilon  \geq N-\frac{C}{\alpha(\varepsilon)}
	$$
	holds, where 
	\begin{equation}\label{eq:SetQ}
	\mathbb{Q}_\varepsilon:=\{k\in\{0,\dots,N-1\}\hspace*{1mm}|\hspace{1mm} ||(z^*(k;x_0))-\bar z||\leq \varepsilon \},
	\end{equation}
	$\#\mbb{Q}_\varepsilon$ is the cardinality of $\mbb{Q}_\varepsilon$,
	and $\alpha \in \mathcal{K}_\infty$.\footnote{$ \mathcal{K}_\infty$ refers to the set of continuous functions $\mbb{R}_0^+\to \mbb{R}_0^+$ which are $0$ at $0$, strictly monotonously increasing, and satisfy $\displaystyle \lim_{s\to \infty}\alpha(s) = \infty$.}
	$\eDef$
\end{definition}
The above definition is a straight-forward extension of continuous concepts \cite{Gruene_2013,Faulwasser_2017}.\footnote{Indeed the setting of \cite{Gruene_2013} applies to generic normed spaces, hence it includes the setting considered here.}
Its basic meaning is that the amount of time an optimal triplet $z^\star(\cdot)$ spends inside of an $\varepsilon$-ball centered at $\bar z$ is at least $N-\frac{C}{\alpha(\varepsilon)}$, or alternatively, for all horizons the amount of time spent outside of the
$\varepsilon$-ball is bounded independently of $N$ by $\frac{C}{\alpha(\varepsilon)}$. 
For the sake of brevity, we refer to $\bar z = \bar z^\star$ simply as \textit{the turnpike}. We remark that the first part of the optimal solution approaching the turnpike is often called the \textit{entry arc}, while the distinctive departure from the turnpike is called the \textit{leaving arc}. Note that a leaving arc does not need to occur. 

To the end of certifying turnpikes in mixed-integer problems, we recall and adapt  an established dissipativity notion for {OCP}s to the mixed-integer setting.
\begin{definition}[Strict dissipativity with respect to $\bar z$]
	\label{def:dissip}~\\
	A system 
	$x(k+1) = f(x(k), u(k), v(k))\doteq f(z(k))$
	 is said to be \textit{dissipative with respect to a steady-state tuple} $\bar z \in \mathbb{X}\times \mathbb{U} \times \mathbb{V}$, if there exists a bounded function $\lambda: \mathbb{X}\rightarrow\mathbb{R}$ such that for all $z= \begin{bmatrix} x & u & v\end{bmatrix}^\top \in\mathbb{X}\times\mathbb{U}\times\mathbb{V}$
	\begin{subequations}
		\begin{equation}\label{eq:dissip1}
		\lambda(f(z))-\lambda(x)\leq\ell(z)-\ell(\bar z),
		\end{equation}
		with $\ell$ from \eqref{eq:MIOCP}.

		If, additionally, there exists $\alpha\in \mathcal{K}_\infty$ such that
		\begin{equation}\label{eq:dissip2}
		\lambda(f(z))-\lambda(x)\leq -\alpha(\|z-\bar z\|)+\ell(z)-\ell(\bar z),
		\end{equation}
	\end{subequations}
	then the system is said to be \textit{strictly dissipative with respect to} $\bar z = \begin{bmatrix} \bar x & \bar  u & \bar v\end{bmatrix}^\top$.
	
	Moreover, if the above dissipativity notions holds solely along optimal solutions of \eqref{eq:MIOCP} with $x_0 \in \mbb{X}_0$, then {MIOCP} \eqref{eq:MIOCP} is said to be strictly dissipative with respect to $\bar z$. $\eDef$
\end{definition}
The following property follows directly from the above definition. 
\begin{lemma}[Optimality of $\bar z$]
Let MIOCP\eqref{eq:MIOCP} be strictly dissipative with respect to $\bar z$ in the sense of Definition \ref{def:dissip}, then $\bar z=\bar z^\star$ is the unique globally optimal solution to \eqref{eq:MISOP}.$\eDef$
\end{lemma}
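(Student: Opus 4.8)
The plan is to read off optimality of $\bar z$ directly from the strict dissipativity inequality \eqref{eq:dissip2}, evaluated at steady states, where the storage difference telescopes to zero. Throughout I work with the pointwise form of Definition \ref{def:dissip}, i.e. I assume \eqref{eq:dissip2} holds for every $z\in\mbb{X}\times\mbb{U}\times\mbb{V}$. The first, purely bookkeeping step is to note that $\bar z$ is feasible for the stationary problem \eqref{eq:MISOP}: by hypothesis $\bar z=[\bar x~\bar u~\bar v]^\top$ is a steady-state tuple in $\mbb{X}\times\mbb{U}\times\mbb{V}$, so it satisfies both $\bar x=f(\bar z)$ and the admissibility constraints, which is exactly feasibility for \eqref{eq:MISOP}.

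The core step is the following. I would take an arbitrary feasible point $z'=[x'~u'~v']^\top$ of \eqref{eq:MISOP}, so that $x'=f(z')$ and $z'\in\mbb{X}\times\mbb{U}\times\mbb{V}$, and evaluate \eqref{eq:dissip2} at $z=z'$ to obtain
\begin{equation*}
\lambda(f(z'))-\lambda(x')\leq -\alpha(\|z'-\bar z\|)+\ell(z')-\ell(\bar z).
\end{equation*}
The decisive observation is that at a steady state $f(z')=x'$, so the left-hand side vanishes identically; note that this cancellation is exact and does \emph{not} invoke boundedness of $\lambda$. Rearranging gives
\begin{equation*}
\ell(z')-\ell(\bar z)\geq \alpha(\|z'-\bar z\|)\geq 0,
\end{equation*}
hence $\ell(z')\geq\ell(\bar z)$ for every admissible steady state. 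Since $\bar z$ is itself feasible, this already shows that $\bar z$ is a global minimizer of \eqref{eq:MISOP}, i.e. $\bar z=\bar z^\star$.

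To upgrade global optimality to uniqueness I would use strictness of the dissipation rate: because $\alpha\in\mcl{K}_\infty$, one has $\alpha(s)=0$ if and only if $s=0$. Therefore equality $\ell(z')=\ell(\bar z)$ in the chain above forces $\alpha(\|z'-\bar z\|)=0$, whence $\|z'-\bar z\|=0$ and $z'=\bar z$. Consequently no feasible steady state distinct from $\bar z$ can attain the optimal stationary cost, which is precisely the claimed uniqueness.

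I expect the only genuinely delicate point to be the interpretation of Definition \ref{def:dissip} rather than any computation. The argument as written requires \eqref{eq:dissip2} to hold at \emph{arbitrary} admissible steady states, which is the pointwise version. If instead one only assumes dissipativity along optimal trajectories of \eqref{eq:MIOCP}, then applying the inequality at a generic steady state $z'$ is no longer licensed, and one must additionally verify that every admissible steady state is realized by an optimal trajectory of \eqref{eq:MIOCP} — a property that does not hold automatically. I would resolve this by stating the lemma under the pointwise hypothesis (or by supplying that extra realizability argument). The integrality of $v$, by contrast, plays no special role: the discrete structure of $\mbb{V}$ enters neither the telescoping cancellation nor the $\mcl{K}_\infty$ estimate, so the classical economic-MPC argument carries over verbatim to the mixed-integer setting.
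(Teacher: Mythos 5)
Your proof is correct and is precisely the argument the paper implicitly invokes: the paper offers no written proof, stating only that the lemma ``follows directly from the above definition,'' and the intended reasoning is exactly your evaluation of \eqref{eq:dissip2} at an arbitrary admissible steady state, where $f(z')=x'$ makes the storage terms cancel and $\alpha\in\mathcal{K}_\infty$ delivers both optimality and uniqueness. Your closing caveat is also well taken: the lemma's hypothesis, read literally against the last sentence of Definition~\ref{def:dissip}, only guarantees the inequality along optimal solutions of \eqref{eq:MIOCP}, so the argument genuinely requires the pointwise form on $\mathbb{X}\times\mathbb{U}\times\mathbb{V}$ (or an additional realizability step), a gap in the paper's phrasing rather than in your proof.
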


The next assumption helps to establish existence of turnpikes.

\begin{assumption}[Exponential reachability of $\bar x^\star$]	\label{ass:reachable}	
	For all $x_0\in\mathbb{X}_0$, there exist infinite-horizon admissible inputs $u_\infty, v_\infty:\mbb{N}\cap\infty\to \mbb{U}\times\mbb{V}$ and constants $c\geq 0, \rho \in [0,1)$ such that 
$
	\|x(k; x_0, u_\infty(\cdot), v_\infty(\cdot)) - \bar x^\star\| \leq c \rho^k,
$
	i.e. the optimal steady state $\bar x^\star$ is exponentially reachable. $\eDef$
\end{assumption}
\begin{proposition}[{Turnpikes in {MIOCP}s}]
	\label{prop:turnpike}	
	Let Assumption \ref{ass:reachable} hold and suppose that, for all $x_0 \in \mbb{X}_0$, the MIOCP \eqref{eq:MIOCP} is strictly dissipative with respect to $\bar z^\star$. Then the MIOCP \eqref{eq:MIOCP} has an input-state turnpike at $\bar z^\star = \bar z$. $\eDef$ 
\end{proposition}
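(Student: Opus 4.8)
The plan is to adapt the by-now standard dissipativity-based turnpike argument (cf. \cite{Gruene_2013,Faulwasser_2017}) to the mixed-integer setting, in three steps: (i) turn strict dissipativity into a summed, telescoping estimate along the optimal triplet; (ii) bound the resulting optimal cost excess uniformly in $N$ via the reachability Assumption~\ref{ass:reachable}; and (iii) convert the summed bound into the counting statement of Definition~\ref{def:turnpike}. Existence of optimal solutions is presumed throughout (and is plausible here, since $\mathbb{V}\subseteq\Znv$ compact is finite and $\mathbb{X},\mathbb{U}$ are compact with continuous data).

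For step (i), I would evaluate the strict dissipativity inequality \eqref{eq:dissip2} along an optimal solution, i.e. set $z=z^\star(k)$ so that $f(z^\star(k))=x^\star(k+1)$, and sum over $k=0,\dots,N-1$. The storage terms $\lambda(x^\star(k+1))-\lambda(x^\star(k))$ telescope to $\lambda(x^\star(N))-\lambda(x^\star(0))$, yielding
\begin{equation*}
\sum_{k=0}^{N-1}\alpha(\|z^\star(k)-\bar z\|)\leq \sum_{k=0}^{N-1}\big(\ell(z^\star(k))-\ell(\bar z)\big)+\lambda(x^\star(0))-\lambda(x^\star(N)).
\end{equation*}
Since $\lambda$ is bounded by Definition~\ref{def:dissip}, the last two terms are dominated by $2\sup_{\mathbb{X}}|\lambda|$, so it remains to bound the cost-excess sum independently of $N$.

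For step (ii), I would use Assumption~\ref{ass:reachable} to build a feasible comparison trajectory $z_\infty$ steering $x_0$ toward $\bar x^\star$, and invoke optimality, $J^\star(x_0,N)\leq \sum_{k=0}^{N-1}\ell(z_\infty(k))+V_f(x_\infty(N))$. Subtracting $N\ell(\bar z)$ and using that $V_f$ is continuous on the compact set $\mathbb{X}$ (hence bounded), the cost-excess sum along $z^\star$ is bounded by $\sum_{k=0}^{\infty}|\ell(z_\infty(k))-\ell(\bar z)|+2\sup_{\mathbb{X}}|V_f|$, and Lipschitz continuity of $\ell$ together with the geometric decay $\|x_\infty(k)-\bar x^\star\|\leq c\rho^k$ should render this tail sum finite, giving a constant $C$ independent of $N$. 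Combining with step (i) yields $\sum_{k=0}^{N-1}\alpha(\|z^\star(k)-\bar z\|)\leq C$. Finally, for step (iii), every $k$ outside $\mathbb{Q}_\varepsilon$ satisfies $\alpha(\|z^\star(k)-\bar z\|)>\alpha(\varepsilon)$ by strict monotonicity of $\alpha$, so $\#(\{0,\dots,N-1\}\setminus\mathbb{Q}_\varepsilon)\,\alpha(\varepsilon)\leq C$; rearranging gives $\#\mathbb{Q}_\varepsilon\geq N-C/\alpha(\varepsilon)$, which is exactly Definition~\ref{def:turnpike}.

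The main obstacle is the finiteness of $\sum_k|\ell(z_\infty(k))-\ell(\bar z)|$ in step (ii), which is genuinely mixed-integer specific: since $\mathbb{V}\subseteq\Znv$ is compact and hence finite, the integer component $v_\infty(k)$ cannot drift continuously toward $\bar v^\star$, so exponential decay of the state alone does not control the cost excess. One must ensure the comparison inputs actually settle on $(\bar u^\star,\bar v^\star)$---with $v_\infty(k)=\bar v^\star$ after finitely many steps---while the state contracts geometrically, so that Lipschitz continuity of $\ell$ makes the excess summable; establishing that Assumption~\ref{ass:reachable} can be realized by such a settling input sequence is where the argument needs the most care. A secondary point is uniformity of $C$ over $x_0\in\mathbb{X}_0$, which should follow from compactness of $\mathbb{X}_0$ provided the reachability constants $c,\rho$ can be chosen uniformly; once these are in hand, steps (i) and (iii) are routine and mirror the real-valued case.
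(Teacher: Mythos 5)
Your three-step argument (telescoping the strict dissipativity inequality along the optimal triplet, bounding the cost excess uniformly in $N$ via Assumption~\ref{ass:reachable} and optimality, then counting the time indices outside the $\varepsilon$-ball) is exactly the standard pattern the paper invokes: the authors omit the proof and refer to \cite{Faulwasser_2018,Gruene_2013}, whose argument is precisely what you reconstruct. Your closing caveat---that exponential decay of the state alone does not make $\sum_k|\ell(z_\infty(k))-\ell(\bar z)|$ finite unless the comparison inputs, in particular the integer input, settle on $(\bar u^\star,\bar v^\star)$---is a legitimate observation about the strength of Assumption~\ref{ass:reachable} as stated, but it does not change the fact that your route coincides with the paper's intended one.
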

The proof follows the pattern of the ones presented in  \cite{Faulwasser_2018,Gruene_2013} for continuous discrete-time OCPs and is thus omitted.

It is worth noticing that the turnpike property  implies the following  property of optimal solutions:
\begin{proposition}[Integer controls exactly at turnpike] \label{prop:vstar}
	For all $x_0\in\mbb{X}_0$, let MIOCP \eqref{eq:MIOCP} have an input-state turnpike at $\bar z^\star = \bar z$ in the sense of Definition \ref{def:turnpike}. Then, for all $\varepsilon \in (0, 1)$ and sufficiently large $N\in\mbb{N}$, the optimal integer controls $v^\star(\cdot;x_0)$ satisfy
	\[
	\hspace*{1.95cm}v^\star(k;x_0) \equiv \bar v^\star \text{ for all }k\in\mbb{Q}_\varepsilon.
	\hspace*{2.05cm}\eDef
	\]
\end{proposition}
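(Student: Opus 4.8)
The plan is to read the turnpike bound not just as a statement about the state and continuous input, but about the \emph{entire} triplet $z^\star(k;x_0)$, whose integer block $v^\star(k;x_0)$ is then pinned down by integrality. The whole argument is essentially a one-line consequence of integrality once the non-vacuity of $\mathbb{Q}_\varepsilon$ is secured, so I would organise it in three short steps.

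First I would dispose of the role of ``sufficiently large $N$''. Its only purpose is to make the claim non-trivial: by Definition~\ref{def:turnpike} we have $\#\mathbb{Q}_\varepsilon \geq N - C/\alpha(\varepsilon)$, so as soon as $N > C/\alpha(\varepsilon)$ the set $\mathbb{Q}_\varepsilon$ is non-empty. (For smaller $N$ the conclusion still holds, but possibly over an empty index set.) I would then fix an arbitrary $k \in \mathbb{Q}_\varepsilon$ and invoke the defining inequality \eqref{eq:SetQ}, namely $\|z^\star(k;x_0) - \bar z\| \leq \varepsilon$.

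Second, I would pass from the full triplet to its integer component. Since any standard norm dominates the norm of a sub-block of its argument, the bound on $z^\star(k;x_0)-\bar z$ yields $\|v^\star(k;x_0) - \bar v^\star\| \leq \|z^\star(k;x_0) - \bar z\| \leq \varepsilon < 1$, where the last strict inequality is exactly where the hypothesis $\varepsilon \in (0,1)$ enters.

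The decisive third step is integrality. Both $v^\star(k;x_0)$ and $\bar v^\star$ lie in $\mathbb{V} \subseteq \mathbb{Z}^{n_v}$, so their difference is an integer vector; as every non-zero integer vector has norm at least one, the strict bound $\|v^\star(k;x_0) - \bar v^\star\| < 1$ forces $v^\star(k;x_0) = \bar v^\star$. Since $k \in \mathbb{Q}_\varepsilon$ was arbitrary, this is the claim. I do not expect a genuine obstacle here; the only point deserving a word of care is the choice of norm, and I would simply remark that the two facts used---domination of a sub-block norm and the unit lower bound on non-zero integer vectors---hold for every $p$-norm with $p \in [1,\infty]$, so the result is independent of the norm within this usual family.
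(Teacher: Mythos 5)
Your argument is correct and is essentially identical to the paper's own proof: both rest on the sub-block norm domination $\|z^\star(k;x_0)-\bar z^\star\|\geq\|v^\star(k;x_0)-\bar v^\star\|$ combined with the integrality observation that a norm strictly below one forces the integer difference to vanish. Your additional remarks on the non-vacuity of $\mbb{Q}_\varepsilon$ for large $N$ and on norm-independence are accurate elaborations of points the paper leaves implicit.
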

\begin{proof}
Recall that $\|z^\star(k;x_0) -\bar z^\star\|\geq \|v^\star(k;x_0) -\bar v^\star\|$, and note the fact that for the integer controls $\|v^\star(k;x_0) -\bar v^\star\|<1 \Leftrightarrow \|v^\star(k;x_0) -\bar v^\star\| =0$. Combining both and taking the definition of  $\mbb{Q}_\varepsilon$ in \eqref{eq:SetQ} into account yields the assertion. 
\end{proof}

It is fair to ask how the dissipativity property from Definition \ref{def:dissip} can be numerically verified for MIOCPs. 
The next result provides a sufficient condition. 
\begin{theorem}[Dissipativity of linear-quadratic MIOCPs]\label{thm:DIchk}
Consider MIOCP \eqref{eq:MIOCP} and let the dynamics and the stage cost be given by   
\begin{align*}
x^+ &= Ax + B_1u +B_2v,\\
\ell(x,u,v) &= x^\top Q x +\begin{bmatrix}
u \\v\end{bmatrix}^\top R \begin{bmatrix}
u \\v\end{bmatrix}+ q^\top x +r^\top\begin{bmatrix}
 u\\v\end{bmatrix},
\end{align*}
with potentially indefinite matrices $Q$ and $R$.
If there exists a matrix $P\in\R^{n_x\times n_x}$ such that
\begin{equation} \label{eq:DI_certificate}
Q +P - A^\top P A \succ 0,
\end{equation}
then MIOCP \eqref{eq:MIOCP} is strictly dissipative, and there exists $p\in\Rnx$
such that $x^\top P x+p^\top x$ is a corresponding storage function.$\eDef$
\end{theorem}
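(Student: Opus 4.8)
The plan is to verify the strict dissipativity inequality \eqref{eq:dissip2} directly by exhibiting the claimed storage function. I would take $\lambda(x)=x^\top P x+p^\top x$ with $P$ the matrix furnished by \eqref{eq:DI_certificate} and $p\in\Rnx$ still free, write $w=\begin{bmatrix} u\\ v\end{bmatrix}$ and $B=\begin{bmatrix} B_1 & B_2\end{bmatrix}$ so that $f(z)=Ax+Bw$, and substitute into $\lambda(f(z))-\lambda(x)$. Subtracting the result from $\ell(z)-\ell(\bar z^\star)$ yields the rotated stage cost
\[
L(z):=\ell(z)-\ell(\bar z^\star)-\bigl(\lambda(f(z))-\lambda(x)\bigr),
\]
a quadratic-plus-affine function of $(x,w)$ whose quadratic part is governed by
\[
\tilde H=\begin{bmatrix} Q+P-A^\top P A & -A^\top P B\\ -B^\top P A & R-B^\top P B\end{bmatrix}.
\]
Since $\lambda$ is polynomial and $\mathbb{X}$ is compact, $\lambda$ is bounded as required by Definition \ref{def:dissip}, and strict dissipativity reduces to the pointwise estimate $L(z)\ge\alpha(\|z-\bar z^\star\|)$ on $\mathbb{X}\times\mathbb{U}\times\mathbb{V}$.

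Next I would center $L$ at the turnpike. Because $f(\bar z^\star)=\bar x^\star$ one has $L(\bar z^\star)=0$, so it suffices to make $\bar z^\star$ a minimizer of $L$. I would fix $p$ through the linear equation arising from $\nabla_x L(\bar z^\star)=0$ and cancel the affine part in the $u$-direction using the first-order stationarity of the stationary problem \eqref{eq:MISOP} in its continuous components $(\bar x^\star,\bar u^\star)$; any residual affine term in the $v$-direction is harmless because the integer input is quantized. After this normalization $L(z)=(z-\bar z^\star)^\top\tilde H\,(z-\bar z^\star)$, and a bound $\alpha(s)=c\,s^2$ follows as soon as $\tilde H$ is positive definite along the directions admissible in the constraint set.

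Condition \eqref{eq:DI_certificate} is exactly what renders the leading state block $Q+P-A^\top P A$ of $\tilde H$ positive definite, and the mixed-integer structure disposes of the $v$-directions: since $v\in\Znv$, any feasible $v\neq\bar v^\star$ obeys $\|v-\bar v^\star\|\ge 1$, so those directions are separated from the turnpike by an order-one gap---the same mechanism already exploited in Proposition \ref{prop:vstar}---and the associated, possibly indefinite, block of $\tilde H$ cannot produce a continuous descent toward $\bar z^\star$. I expect the real obstacle to be the continuous input block: directions in which only $u$ moves are not quantized, so one must still guarantee that the Schur complement of $\tilde H$ with respect to its state block---equivalently the penalty $R_{uu}-B_1^\top P B_1$ corrected by the cross term $-A^\top P B_1$, with $R_{uu}$ the $u$-block of $R$---is positive definite. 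I would settle this by completing the square in $(u,v)$ and showing that \eqref{eq:DI_certificate}, supplemented by positive-definiteness of this input Schur complement, forces the estimate $L(z)\ge\alpha(\|z-\bar z^\star\|)$; pinning down the minimal extra requirement on the input penalty, and checking it for the linear-quadratic instances of interest, is the step I would scrutinize most carefully, since \eqref{eq:DI_certificate} by itself constrains only the state block of $\tilde H$.
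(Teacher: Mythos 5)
Your route is genuinely different from the paper's, and as written it does not close. The paper's proof is a two-step reduction that never touches the rotated Hessian: it characterizes dissipativity via the available storage $S^a_{\mbb{V}}$, observes that because $\mbb{V}\subset\conv\mbb{V}$ the supremum defining $S^a_{\mbb{V}}(x)$ is taken over a smaller feasible set than that of the continuously relaxed problem, so $S^a_{\mbb{V}}(x)\leq S^a_{\conv\mbb{V}}(x)$, and then invokes the cited result of Gr\"une et al.\ (Lemma 4.1 of \cite{Gruene18a}) that \eqref{eq:DI_certificate} is necessary and sufficient for strict dissipativity of the relaxed linear--quadratic problem with a quadratic storage function on bounded sets. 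In other words, the key idea you are missing is the monotonicity of the available storage under relaxation of the integer constraint, which lets the mixed-integer statement be inherited wholesale from the continuous one; all of the algebra you propose to do by hand is absorbed into the citation.

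The concrete gap in your proposal is that you end up proving a different theorem. You correctly observe that \eqref{eq:DI_certificate} controls only the state block $Q+P-A^\top PA$ of $\tilde H$, and you then \emph{add} positive definiteness of the input Schur complement as a supplementary hypothesis to force $L(z)\geq\alpha(\|z-\bar z^\star\|)$. That hypothesis is not part of the theorem, so your argument as it stands establishes strict dissipativity only under a strictly stronger assumption. Your quantization argument for the $v$-directions is also not sufficient on its own: an order-one gap $\|v-\bar v^\star\|\geq 1$ separates $v\neq\bar v^\star$ from the turnpike, but if the $(v,v)$-block of $\tilde H$ is indefinite the rotated cost can still be driven negative at feasible points with $v\neq\bar v^\star$ unless one uses compactness of $\mbb{X}\times\mbb{U}\times\mbb{V}$ to absorb the deficit, which you do not do. That said, the subtlety you surfaced is real: the paper discharges it only by citation, and the cited continuous-time characterization delivers strictness measured in the state, so the passage to strictness in the full triplet $z$ (as Definition \ref{def:dissip} demands) deserves exactly the scrutiny you are proposing to give it. If you want to complete your direct computation under the stated hypotheses, you should exploit boundedness of the constraint sets (as the paper's cited lemma does) rather than positive definiteness of the input blocks.
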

\begin{proof}
The proof combines the available storage characterization of dissipativity with recent results from \cite{Gruene18a}. 

The available storage is given by 
\begin{align*}
S^a_\mbb{V}(x) =& \sup_{z(\cdot), N} ~\sum_{k=0}^N -\alpha(\|(z_k) - \bar z\|)+ \ell(z_k) + \ell(\bar z)\\
\text{subject to }&\\
x(k+1) &= Ax(k) +\begin{bmatrix}B_1&B_2\end{bmatrix}\begin{bmatrix}u(k)\\v(k)\end{bmatrix}, ~ x(0)=x\\
x(k)&\in\mbb{X}, ~ u(k)\in\mbb{U}, ~v(k)\in \mbb{V},
\end{align*}
whereby the subscript $\cdot_\mbb{V}$ refers to the constraints on the discrete control input. 
Indeed the MICOP \eqref{eq:MIOCP} is dissipative on $\mbb{X}_0$ if $S^a_\mbb{V}(x) < \infty$ for all $x \in \mbb{X}_0$, cf.  \cite{Willems72a}.

Step 1: Let $\conv\mbb{V}$ be the convex hull of $\mbb{V}$, which is compact if and only if $\mbb{V}\subset\mbb{Z}$ is compact. The inclusion $\mbb{V}\subset \conv\mbb{V}$ implies that 
\[
S^a_\mbb{V}(x) \leq S^a_{\conv{\mbb{V}}}(x),
\]
i.e. dissipativity of the continuously relaxed OCP certifies dissipativity of the 
MIOCP. 

Step 2: For the continuously relaxed problem it has been shown in \cite[Lem. 4.1]{Gruene18a} that \eqref{eq:DI_certificate} is a necessary and sufficient condition for strict dissipativity with quadratic storage function on bounded subsets of $\Rnx$. Combining both facts yields the assertion. 
\end{proof}

\begin{remark}[Turnpikes and parametric MIOCPs]~\\
The above results highlight that \textit{turnpikes are to be understood as similarity properties of parametric MIOCPs}. We remark that the Propositions \ref{prop:turnpike}  and \ref{prop:vstar} apply to non-convex \textit{and} convex MIOCPs alike.\footnote{Similar to \cite{Bonami2012,Lubin2017}, we say an {MINLP}/MICOP is convex if its continuous relaxation is convex.} They also allow for linear and nonlinear dynamics. Moreover, Theorem \ref{thm:DIchk} addresses linear-quadratic problems, which however also do not need to be convex, i.e. indefinite matrices $Q,R$ are included. We remark that the solution $P$ to the Lyapunov equation \eqref{eq:DI_certificate} does not need to be positive definite. Moreover, notice that dissipativity of the MICOP does not depend on the specific choices of the linear weightings $q, r$ in the cost function.
$\eDef$
\end{remark}

 We proceed to sketch how these findings can be leveraged to design efficient solution strategies for {MIOCP}s. 

\section{Branch and Bound with Node Weighting}\label{sec:heur}
Here we focus on an intuitive strategy for branch-and-bound algorithms that allows for exploiting a-priori knowledge of a turnpike. 
Specifically we suggest prioritizing exploration of nodes of the decision tree which are associated with the steady state optimal integer decisions at the turnpike (which is usually in the middle of the time horizon).\footnote{Indeed leveraging the concept of \textit{exact turnpikes}~\cite{kit:faulwasser17a}, one can extend Proposition \ref{prop:vstar} and show that the integer-valued controls will be exactly at the turnpike in the middle part of the horizon. Note that, without additional assumptions, in the continuous setting the solutions merely stay \textit{close} to the turnpike but do not need to reach it \textit{exactly}.} 
\begin{figure}[t] 
	\begin{center}
		\begin{minipage}{0.49\textwidth}
			\rule{1\textwidth}{0.3mm}\\
			\textbf{
			Algorithm~1: Branch and Bound with Node Weighting}\\
			\rule{1\textwidth}{0.3mm}\\
			\textbf{Input:} Guesses $\mcl{V}_0 =\{\mbf{V}_{0, 1}, \dots, \mbf{V}_{0,M}\}$ and corresponding weights $\mcl{W}_0=\{w_{0,1}, \dots, w_{0,M}\}$. Termination tolerance $\epsilon>0$. Default search strategy (depth-first, breadth-first, $\dots$) and corresponding weights $\mcl{W}$.\\
			\textbf{Preparation:} Set $U=\infty$, $L=-\infty$, $\mcl{T}=\emptyset$. Re-index nodes $\mcl{N}$ according to weights $\mcl{W}_0$. Candidate node set $\mcl{S}=\{0\}$.\\
			\textbf{\textbf{While} $\mcl{S} \neq \emptyset$:} 
			\\[-0.3cm]
			\begin{enumerate}
				\setlength{\itemsep}{2pt}
				
				\item \label{step::2}  
				$\displaystyle \forall n \in \mcl{S}$\quad \\
				$\displaystyle \tilde w(n)=w(n)+\sum_{i=1}^{M} w_{0,i} \left(\D_\mbb{V}(\mbf{V}_{0,i})-\|\mbf{V}_{0,i}-\mbf{V}_n\|_0\right)$
				
				\item \label{step::3} $\displaystyle \tilde n = \underset{n\in\mcl{S}}{\arg\max} ~ \tilde w(n)$ and $\mcl{S}\leftarrow \mcl{S}\setminus \{\tilde n\}$
				
				
				\item \label{step::nlp} \textbf{Solve} NLP($\mbf{V}_{\tilde n}$) for $J^\star(\mbf{V}_{\tilde n})$ and 
				${z}^\star(\mbf{V}_{\tilde n})$ and $\mcl{T}\leftarrow \mcl{T}\cup \tilde n$ 
				
				\item \label{step:ub} \textbf{If} ${z}^\star(\mbf{V}_{\tilde n})$ is feasible in MIOCP \eqref{eq:MIOCP} and $J^\star(\mbf{V}_{\tilde n})<U$, then $U \leftarrow J^\star(\mbf{V}_{\tilde n})$, proceed to Step \ref{step:term}.\\				
				\textbf{If} $J^\star(\mbf{V}_{\tilde n})>U$ proceed to Step \ref{step::2}. \\
				\textbf{Else} add the child nodes of $\tilde n$ to $\mcl{S}$,  proceed to Step \ref{step:term}.
				
				\item \label{step:term} 
				$	L \leftarrow  \displaystyle \min_{n\in\mcl{P}(\mcl{S})}\{J^\star(\mbf{V}_{n})\}$\\
				\textbf{If} $U-L\leq \epsilon$ terminate. \\
				\textbf{Else} proceed to Step \ref{step::2}.

				
			\end{enumerate}
			
			\removelastskip\rule{1\textwidth}{0.3mm}
		\end{minipage}
	\end{center}
\end{figure}

To this end, let $\mbf{V} \in \mbb{R}^{n_v\times N}$ be a full or partial guess of the sequence of optimal integer decisions $v^\star(\cdot;x_0)$ for Problem \eqref{eq:MIOCP}, and let $\mbf{V}(k)$ denote its $k$-th column, which corresponds to the integer decisions for the $k^{th}$ time step.

Consider the following NLP relaxation of MIOCP \eqref{eq:MIOCP}
\begin{equation} \label{eq:NLP}
\text{NLP}(\mbf{V}) = \left\{\text{\eqref{eq:MIOCP} with } \left\{
\begin{array}{l l}
v(k) = \mbf{V}(k) & \text{if }  \mbf{V}(k) \in \mbb{V}\\
v(k) \in \conv\mbb{V}&\text{else} \end{array}\right.\right\},
\end{equation}
i.e. $\eqref{eq:MIOCP_Z}$ is continuously relaxed.
This relaxed problem takes $\mbf{V}$ as a (partial or full) guess of the integer variables $v(0), \dots, v(N-1)$. The condition  $\mbf{V}(k) \not\in \mbb{V}$ implicitly encodes the relaxation   $\mbf{V}(k) \in \conv\mbb{V}$.
We write  $J^\star(\mbf{V})$ and ${z}^\star(\mbf{V})$ to denote the optimal performance bound, respectively, the (partially) relaxed solution triplet obtained from solving \text{NLP}($\mbf{V}$). 
Let $\D_\mbb{V}(\mbf{V})$ denote the number of feasible integer components of $\mbf{V}$, i..e.  $\D_\mbb{V}(\mbf{V}) = N\cdot n_v$ means for all $k \in \{0, \dots, N\}$ $\mbf{V}(k) \in \mbb{V}$, and  $\D_\mbb{V}(\mbf{V}) = 0$ implies that all integer decisions are relaxed in \eqref{eq:NLP}. 

Algorithm~1 lays out a straight-forward method for making use of $\mbf{V}$ in the branch-and-bound decision tree that goes beyond warm-starting. For a decision tree consisting of nodes $\mcl{N}$, let $\mbf{V}_n$ denote the integer decision related to the node $n$ of the branch-and-bound tree, and let $\mcl{P}(\mcl{S})$ denote the parent nodes for a node set $\mcl{S}\subseteq \mcl{N}$.
In Step \ref{step::2} of Algorithm~1, the task is to determine which node should be explored first. To this end, the given guesses $\mcl{V}_0$ and their corresponding chosen weights $\mcl{W}_0$ are used. The number of different elements between the guess $\mbf{V}_{0,i}$ and the integer vector at node $n$ of the branch-and-bound tree ,$\mbf{V}_{n}$, is used to compute the weight update.\footnote{We remark that slight abuse of notation is caused by suppressing, for the sake of readability, the vectorization in $\|\mbf{V}_{0,i}-\mbf{V}_n\|_0$ in Step \ref{step::2}.}  Step \ref{step::3} passes the node $\tilde n$ with the highest weighting and keeps track of explored nodes in the branch and bound tree, which is needed for computing the lower bound in Step \ref{step:term}. Step \ref{step::nlp} fixes some integer variables according to the chosen node $\tilde n$, and the others take values from the convex relaxation of $\mathbb{V}$, cf. \eqref{eq:NLP}.  Step \ref{step:ub} determines whether the upper bound should be updated, and if not whether child nodes of $n$ should be added to the list of candidate nodes. Finally, Step \ref{step:term} checks whether the algorithm should terminate due to the upper and lower bounds being within the given tolerance $\epsilon$.

It is easy to see that Algorithm~1 inherits properties of standard branch-and-bound schemes, i.e. if the relaxed {NLP}s in Step \ref{step::nlp} are solved to global optimality then a globally optimal solution to \eqref{eq:MIOCP} will be found, as the worst case is full enumeration \cite{Hansen03a}.  Observe that as such Algorithm~1 does not require any knowledge about a turnpike property. However, the turnpike of the underlying {MIOCP} can and should be encoded in ($\mcl{V}_0$,  $\mcl{W}_0$). For example, this can be done by formulating guesses leveraging the insight of Proposition \ref{prop:vstar}. In other words, the initial guesses  ($\mcl{V}_0$, $\mcl{W}_0$) shall comprise high priority cases where $v(k) \equiv \bar v^\star$ holds.

\begin{remark}[Alternative branching / weighting strategies]
We remark that one can imagine a generic branching strategy as following a specific node-weighting pattern. For example, a depth-first search weights lower nodes in the decision tree more highly, while a breadth-first search would weight the top nodes more than the bottom ones. However,  in our prototpyical implementation, we observe that it is not advisable to explictly weight large swathes of the branch and bound tree nodes since computing the weighting and storing the resulting data can be computationally expensive. Rather, as illustrated in the example of the next section, just a few nodes should be weighted. The question of how many nodes to weight and how to construct weighting strategies for classes of problems that are not in the form of \eqref{eq:MIOCP} are still open problems. $\eDef$
\end{remark}

\section{Numerical Examples}\label{sec:sim}
To test the performance of Algorithm~1 we consider parametric convex linear-quadratic MIOCPs, which result in MIQPs and for which dissipativity can be checked via Theorem \ref{thm:DIchk}. The solutions are obtained using a prototypcial implementation of Algorithm~1 within MATLAB R2019a relying on {CasADi v3.5.0} \cite{Andersson19a} and {IPOPT} to solve the {QP} subproblems. The imple\-mented branch-and-bound algorithm starts with a depth-first branching strategy to obtain an upper bound and then seeks to improve the lower bound as quickly as possible. All numerical experiments were run on a 2.9GHz Intel Core i5-4460S CPU with 8GB of RAM.

\subsubsection*{Example 1}\label{sec:ex1}
We consider the dynamics proposed in \cite{Bemporad_1999}:
$$
x_1(k+1) = 
\begin{cases}
\phantom{-}0.8x_1(k) + u(k),\quad \text{if } x_1(k) \geq 0,\\
-0.8x_1(k) + u(k), \quad\text{if } x_1(k) < 0, \nonumber\\
\end{cases}
$$
with $\underline{x} \leq x_1(k) \leq \overline{x}$. This can be reformulated into a mixed-integer system of equations through the introduction of the continuous state variable $x_2$ and the discrete input $v$:
\begin{subequations} \label{eq:mi_dyn}
\begin{align} 	
x_1(k+1) &= 0.8x_2(k) + u(k), \\
2v(k)\underline{x}  &\leq x_2(k)+x_1(k) \leq 2v(k) \overline{x}, \\
2(v(k)-1) \overline{x} &\leq x_2(k)-x_1(k) \leq 2(v(k)-1)\underline{x} , \\
(1-v(k))\underline{x}  &\leq x_1(k) \leq v(k) \overline{x}, \\
\underline{u} &\leq u(k) \leq \overline{u},  \\
v(k) &\in \{0,\,1\}.
\end{align}
\end{subequations}
It is easy to verify that if $v(k)=0$ then $x_2(k)=-x_1(k)$ and $x_1(k)<0$, and if $v(k)=1$ then $x_2(k)=x_1(k)$ and $x_1(k) \geq 0$. The considered {MIOCP} reads:
\begin{align} \label{eq:lit_ex}
\min_{z(\cdot)}  \sum_{k=0}^{N-2} 
l^\top
\left[\begin{smallmatrix}x_1(k) \\ x_1^2(k)\\ \phantom{_1}u(k) \\ u^2(k)\end{smallmatrix}\right] + 
l_f^\top
\left[\begin{smallmatrix}x_1(N-1) \\ x_1^2(N-1)\\ \phantom{_1}u(N-1) \\ u^2(N-1)\end{smallmatrix}\right] \\
 \text{subject to } \eqref{eq:mi_dyn} \text{ and } x_1(0)=x_0, \nonumber
\end{align}
with  $l^\top = \begin{bmatrix}-10&100&10&100\end{bmatrix}$ and $l_f^\top = \begin{bmatrix}-1000&100&10&100\end{bmatrix}$.

All results presented use $\underline{x}=-1$, $ \overline{x}=1$, $\underline{u}=0.5$, and $\overline{u}=0.5$.
Algorithm~1 is provided a collection of complete discrete solution guesses and weights, as shown in Table~\ref{t:sol_vecs}.
Figure \ref{fig:results1} depicts the optimal solution with $N=20$ and $x_0 = 1$. Note that one can clearly spot the turnpike at {$\bar z^\star = \begin{bmatrix}-\frac{1}{9}& \frac{1}{9} &-0.2 & 0\end{bmatrix}$}, which corresponds to solving \eqref{eq:MISOP} for this example. 
Shown in Table \ref{t:ex1} are the aggregated results for each combination of $x_1(0)=\{\underline{x},\underline{x}+0.1,\dots,\overline{x}-0.1,\overline{x}\}$
for $N \in\{10,20\}$.
These results compare  Algorithm~1 without initial guesses (``std. B\&B") to Algorithm~1 leveraging the collection of solution vectors and weights from above (``node wthg"). The guesses and weights used are shown in Table \ref{t:sol_vecs}.
\begin{figure}[t]
		\centering
\hspace*{-3mm}	\includegraphics[width=0.5\textwidth]{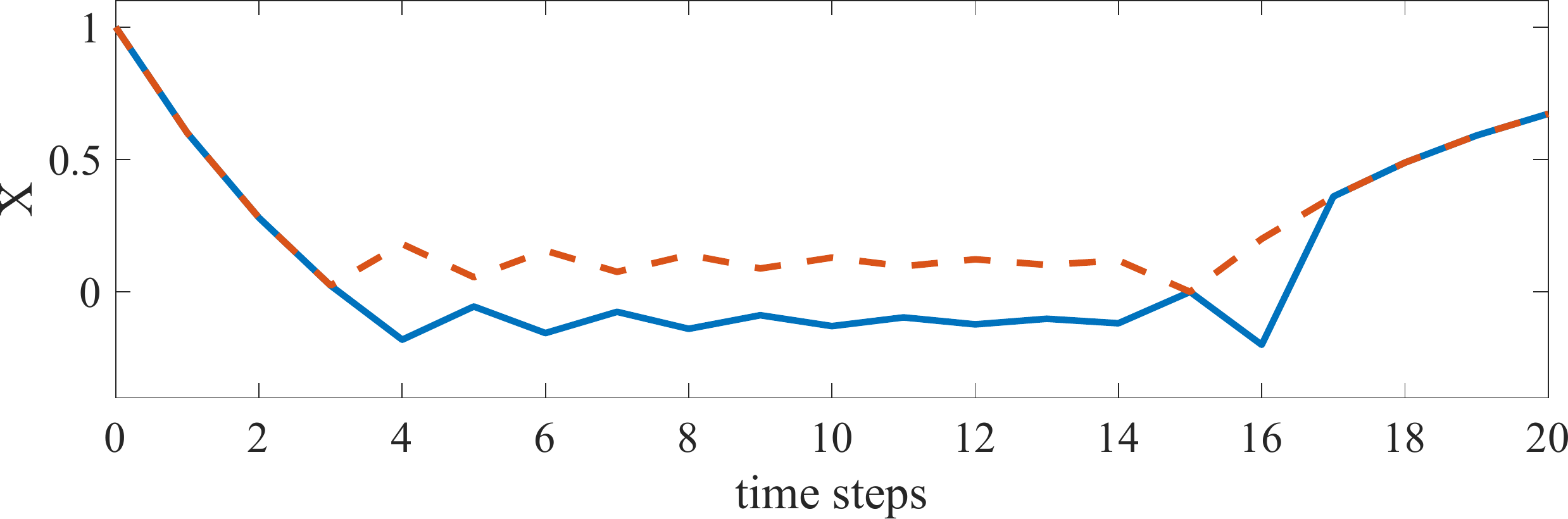}\\
	\includegraphics[width=0.5\textwidth]{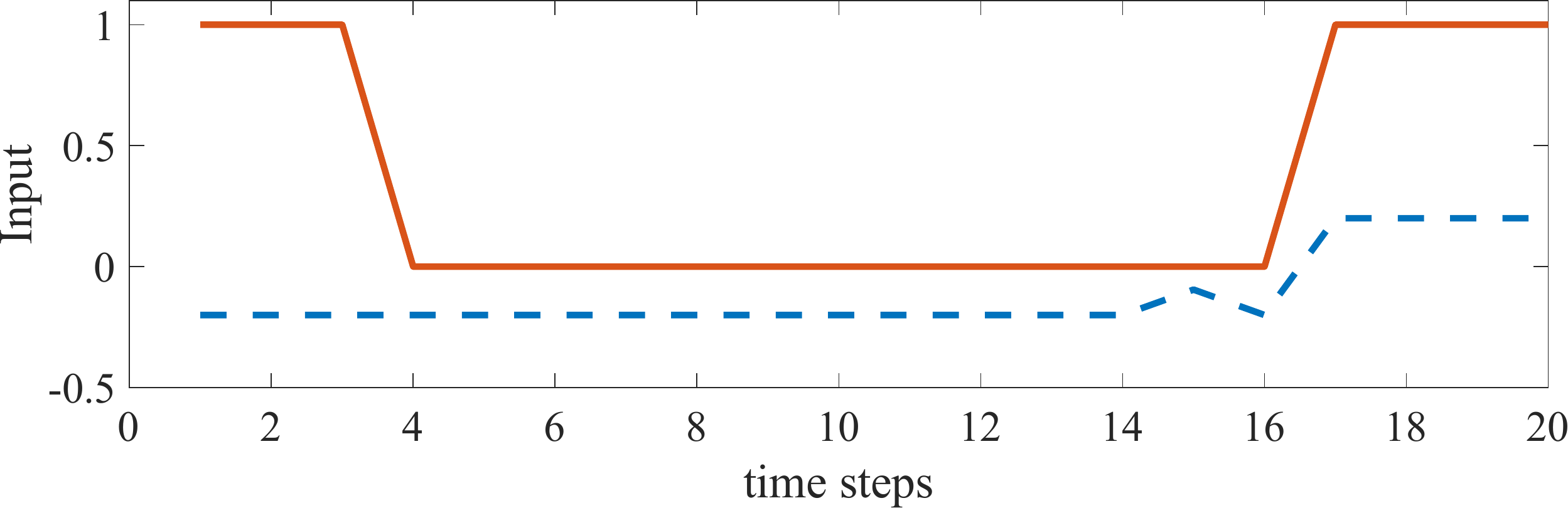}
	\caption{Opt. solutions for Example 1 with $N=20$ and $x_0=1$. Left: state trajectories $x_1$ (dashed blue) $x_2$ (solid red). Right: constrols $u$ (dashed blue) $v$ (solid red).}
	\label{fig:results1}
\end{figure}
\begin{table}[t]
	\centering
	\setlength{\tabcolsep}{10pt}
	\renewcommand{\arraystretch}{1.3}
	\caption{Guesses $\mcl{V}_0$ and weights $\mcl{W}_0$ for Example \eqref{eq:lit_ex}.}
	\label{t:sol_vecs}	
	\begin{tabular}{|c|c|| c|c|}	
		\hline
		$\mbf{V}_{0,i}$ &  $w_{0,i}$ &  $\mbf{V}_{0,i}$ &  $w_{0,i}$\\ 
		\hline
			[1,1,0,$\dots$,0] & 1 &	{[1,1,1,0,$\dots$,0]} & 2	\\
			{[1,1,1,0,$\dots$,0,1]} & 3 &	{[1,1,1,0,$\dots$,0,1,1]} & 4 \\
			{[1,1,1,0,$\dots$,0,1,1,1]} & 3&	{[1,1,1,0,$\dots$,0,1,1,1,1]} & 2\\
		\hline
	\end{tabular}
\end{table}

\begin{table}[t]
	\centering
	\setlength{\tabcolsep}{4pt}
	\renewcommand{\arraystretch}{1.3}
	\caption{Results for Example 1 for 21 samples of $x_0$.}
	\label{t:ex1}

	\begin{tabular}{|c|c|c|}
		\hline
		$T=20$ & std. B\&B  & node wght\\ 
		\hline
		avg. \# nodes & 1125.15 & 925.24  \\
		median \# nodes & 1126	& 890	\\
		avg. runtime (s) & 3000 &  466.01	\\
		median runtime & 3000 & 274.83\\
		best \# nodes & 1111 &  508 \\
		best runtime (s) & 3000.12 & 90.63\\
		avg. subopt.& 423 & 0\\
		\hline
	\end{tabular} 	\begin{tabular}{|c|c|c|}
		\hline
		$T=10$ & std. B\&B  & node wght\\
		\hline
		avg. \# nodes & 503.43 & 147.71 	 \\
		median \# nodes & 515	& 102	\\
		avg. runtime (s) & 739.87 & 149.95  	 \\
		median runtime & 357.19 & 7.01 \\
		best \# nodes & 135 &  34	\\
		best runtime (s) & 15.85 & 2.08   \\
		avg. subopt. & 0 & 0 \\
		\hline
	\end{tabular}	
\end{table}
A termination limit of 3000 seconds is set for each test, however, the simple standard branch-and-bound algorithm often failed to terminate within this time for many of the larger problems, which is the cause of some suboptimality. As evidenced by the results in Table \ref{t:ex1}, the proposed  node weighting, which exploits the turnpike property of MIOCP \eqref{eq:lit_ex}, yields solutions much more quickly.

\subsubsection*{Example 2}\label{sec:ex2}
As a second example we consider
\begin{align} \label{my_ex}
\min_{z} &\quad \sum_{k=0}^{N-1} \ell(z(k)) \nonumber\\ 
\text{s. t. } &\forall k \in \{0,\dots,N-1\},\\
 x(k+1) &= Ex(k) + B_1 u(k) + B_2 v(k), ~x(0) =x_0\nonumber \\
 z(k)&\in \mbb{R}^{n_x} \times \{0,\,1\}\nonumber
\end{align}
with $E=\begin{bmatrix} 0& I\\0& 0\end{bmatrix}$m $I \in \mbb{R}^{n_x-1 \times n_x-1}$ is the identity matrix, $\ell(z) =  10u+ v(k) + 100u^2 + 100x^\top x$, and $B_1 = \begin{bmatrix}0&0&\dots&0&1\end{bmatrix}^\top$, $B_2 = \begin{bmatrix}1&1&\dots&1&1\end{bmatrix}^\top$. The turnpike is at $\bar z^\star = 0$.%

Shown in Figure \ref{fig:results2} is an example of the optimal solutions for $N=20$ and three state variables $n_x=3$. 
Observe that while Problem \eqref{eq:lit_ex} exhibits a so-called leaving arc---i.e. the optimal solutions depart from the turnpike towards the end of the horizon---this is not the case in Problem \eqref{my_ex}. 

The intial guesses $\mcl{V}_0$ are constructed as partial guesses of the integer controls corresponding to $\overline{v}^\star$ for $k\geq \hat k$ with $\hat k=\{2, \dots,6\}$. Each time Algorithm~1 is called only one of the guesses is passed ($\#\mcl{V}_0 =1$), its weight is set to $w_{0,1} = 4\cdot \max_{w\in\mcl{W}} w$. 
Shown in Table \ref{t:ex2} are the aggregated results for each combination of $x_0=\begin{bmatrix}-0.9 &-0.8&\dots&0.8&0.9\end{bmatrix}^\top +r$  for $N = 10, 20$ and $40$,  where $r$ is a uniformly distributed random vector whose entries range between $-0.1$ and $0.1$. Overall we consider 19 different samples of $x_0$. Moreover, we consider the dimension of the state to be $n_x = 30$. 
As in Example \eqref{eq:lit_ex}, ``node wght" denotes the results from Algorithm~1 using this weighting, while ``std. B\&B" does not use any initial guesses.
The results seen in Table \ref{t:ex2} illustrate the dramatic benefit even a simply node weighting strategy can give. It can be seen that the standard branch-and-bound method is an order of magnitude slower in the smallest case, and this gap only increases as the length of the turnpike increases. Note that all algorithms converge to the optimal solution. Part of the reason for the quick convergence is that rearranging nodes in the setup of Algorithm~1 results in a decision tree with some infeasible or integer-feasible solutions at the first nodes that are explored. This prunes many of the subsequent nodes and greatly reduces the search space.

\begin{figure}[t]
		\centering
	\includegraphics[width=0.239\textwidth]{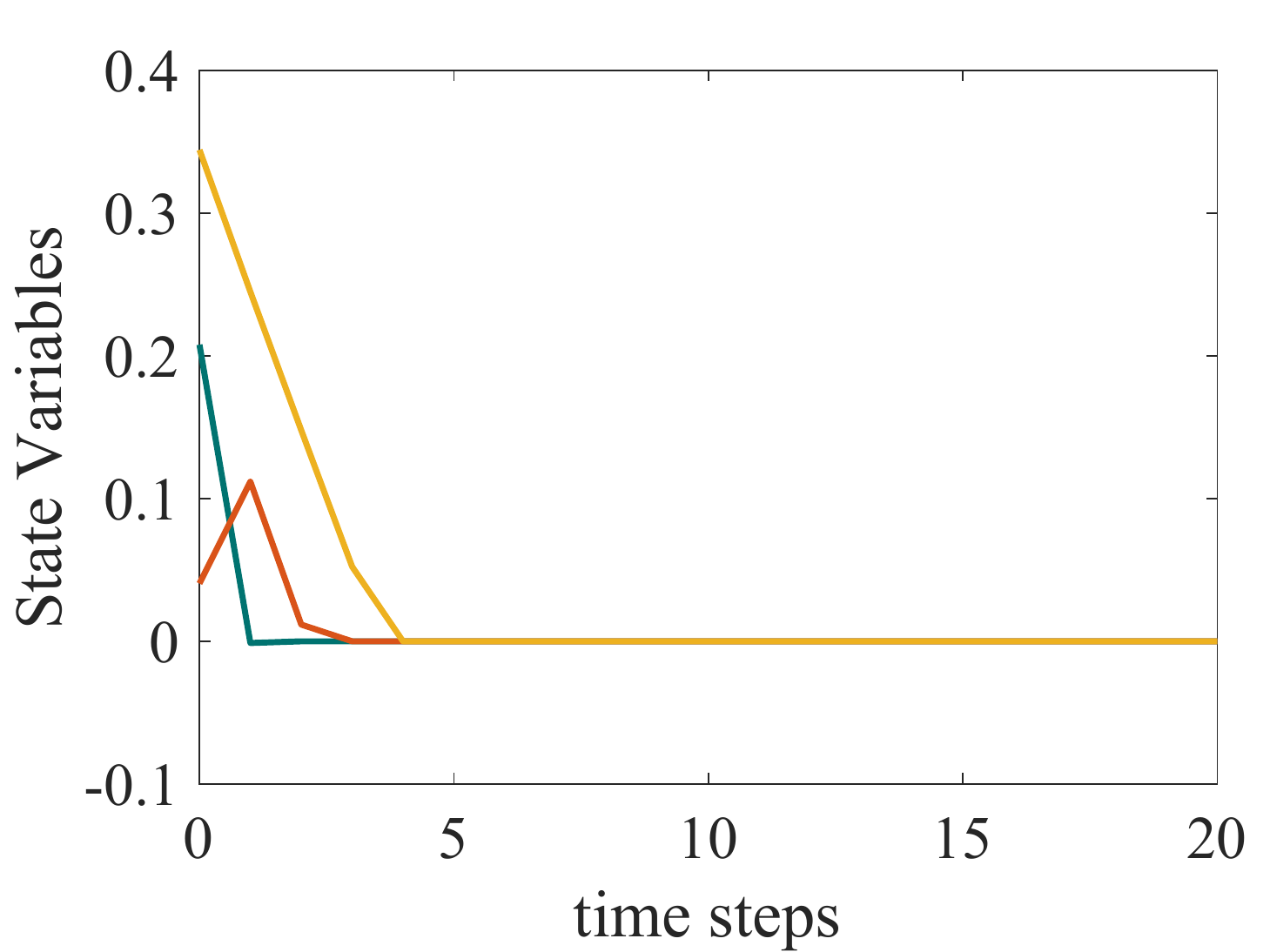}	\includegraphics[width=0.239\textwidth]{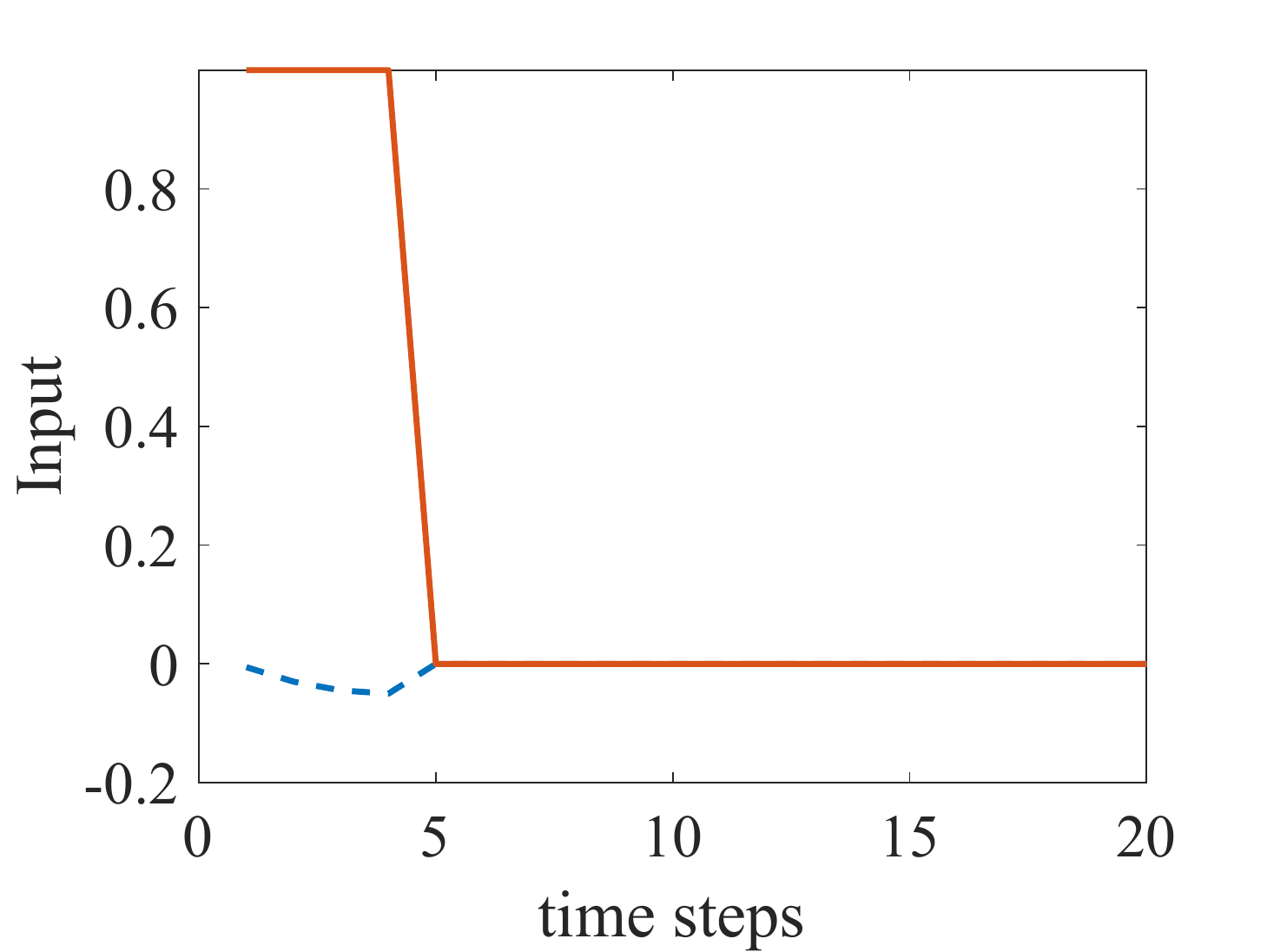}
	\caption{Optimal solution for Example~2 with $N=20$ and $\dim(x)=n_x=3$. Left: state trajectories. Right: controls $u$ (dashed blue) $v$ (solid red).}
	\label{fig:results2}
\end{figure}
\begin{table}[t]	
	\centering
	\setlength{\tabcolsep}{4pt}
	\renewcommand{\arraystretch}{1.3}
	\caption{Results for Example~2 for 19 samples of $x_0\in\mbb{R}^{30}$ and five different guesses $\mcl{V}_0 =\{\mbf{V}_{0,1}\}$.}
	\label{t:ex2}
	\begin{tabular}{|c|c|c|}
		\hline
		$T=10$ &  std. & node  \\ 
		$n_x=30$&  B\&B & wght\\ 
		\hline
		avg. \# nodes  & 51.79 & 6.11  \\
		median \# nodes  & 52	& 6\\
		avg. runtime (s) & 2.96 &  0.32 \\
		median time (s) & 2.46& 0.31\\
		best \# nodes & 36 & 2  \\
		best runtime (s) & 1.61 & 0.11 
		 \\
		\hline
	\end{tabular}
	\begin{tabular}{|c|c|c|}
		\hline
		$T=40$ & std.  & node  \\ 
		$n_x=30$& B\&B & wght  \\ 
		\hline
		avg. \# nodes  & 228.5 & 6.42 \\
		median \# nodes  & 232& 6 \\
		avg. runtime (s) & 35.7 & 0.77 	\\
		median time (s) &36.7 & 0.73\\
		best \# nodes & 158 & 2  \\
		best runtime (s) & 16.31 & 0.23 \\
		\hline
	\end{tabular}
\end{table}

\section{Conclusions and Outlook}\label{sec:conc}\vspace*{-2mm}

This note has taken first steps towards a turnpike theory for  mixed-integer OCPs, thus paving the road for a better understanding of properties of parametric MIOCPs. Specifically, we have provided sufficient turnpike conditions based on a dissipativity notion of MIOCPs.  We have also shown that the discrete controls will enter the turnpike exactly, while for the continuous controls this is not necessarily the case. For the special case of linear-quadratic MIOCPs we have presented an easy to check sufficient condition for dissipativity of MIOCPs on compact sets. Moreover, we have discussed how these insights can be easily leveraged to design node-weighted branch-and-bound schemes.  While the present work appears to be the very first to discuss turnpikes in MIOCPs, at this stage our numerical results are an initial step relying on prototypical implementations. 
Future work will focus on several aspects including further exploitation of the turnpike phenomenon in branch-and-bound schemes for convex and non-convex MIOCPs.

\subsubsection*{Acknowledgement}The authors acknowledge helpful   comments of Veit Hagenmeyer.

\bibliographystyle{plain}



 \bibliographystyle{plain}
 {\scriptsize
\bibliography{MyBibliography,literature_latin1} 
}

\end{document}